\newcommand{\indep}{\perp\!\!\!\perp}
\newtheorem{proposition}{Proposition}
\newtheorem{corollary}{Corollary}
\newtheorem{definition}{Definition}
\newtheorem*{definition*}{Definition}
\newcommand{\DF}[1]{P_{#1}}
\author{Gustav Jonzon \and Erin E Gabriel \and Arvid Sjölander \and Michael C Sachs \\ 
{\small
  Department of Medical Epidemiology and Biostatistics, Karolinska Institutet, Sweden} \\
  {\small Section of Biostatistics, Department of Public Health, University of Copenhagen, Denmark
  }}
  \title{Adding covariates to bounds: What is the question?}
\begin{document}

\maketitle

\begin{abstract}
{Symbolic nonparametric bounds for partial identification of causal effects now have a long history in the causal literature. Sharp bounds, bounds that use all available information to make the range of values as narrow as possible, are often the goal. For this reason, many publications have focused on deriving sharp bounds, but the concept of sharp bounds is nuanced and can be misleading. In settings with ancillary covariates, the situation becomes more complex. We provide clear definitions for pointwise and uniform sharpness of covariate-conditional bounds, that we then use to prove some general and some specific to the IV setting results about the relationship between these two concepts. As we demonstrate, general conditions are much more difficult to determine and thus, we urge authors to be clear when including ancillary covariates in bounds via conditioning about the setting of interest and the assumptions made.}
\end{abstract}




\section{Introduction}

In many fields of applied research, the aim of a study is to estimate the effect of some exposure on some outcome. However, beyond the setting of a randomized controlled trial (RCT) with perfect compliance, (point-)identification of the target estimand requires additional strong and untestable assumptions. If one cannot convincingly justify such assumptions, partial identification offers an alternative.

In settings with discrete variables, there have been many partial identification bounds derived in the literature \citep{manski1990nonparametric, doi:10.1080/01621459.1997.10474074, doi:10.1080/01621459.2018.1434530, gabriel2022causal}. However, not all of these bounds are the narrowest possible given all available information, referred to as \textit{sharp} or \textit{tight} \citep{WOS:000801008700001}. The concept of sharpness is relative to the assumptions one is willing to make, as was clearly outlined in \citet{doi:10.1080/01621459.2018.1434530} for the binary instrumental variable (IV) setting. 

If the covariates are discrete and the restrictions outlined in \citet{WOS:000801008700001} are satisfied, then the methods therein yield sharp bounds given the assumptions encoded in the independent-error nonparametric structural equation model (NPSEM) corresponding to a DAG containing the covariates. Similarly, structural assumptions were used to derive the well-known and extensively used Balke-Pearl bounds for the binary IV setting. However, this procedure can be computationally demanding and does not currently extend to continuous covariates. 

An alternative method was suggested in \citet{https://doi.org/10.1002/sim.2766} for the binary IV setting, where bounds on the estimand were obtained by averaging the stratum-specific nautral IV bounds over strata of a covariate. This procedure has the appealing feature of being easily extended to continuous covariates, although estimation in such settings may require further assumptions. When the covariate is a parent of the exposure and outcome and independent of the IV in the binary IV setting, \citet{https://doi.org/10.1002/sim.2766} show that the covariate-averaged bounds are valid and no wider than the corresponding bounds that disregard the covariate. They also commented on some conditions on the observed distributions for which the resulting bounds will be narrower. However, they did not discuss sharpness. As the additional restrictions on the causal model implied by the assumptions on how the covariates fit in the DAG are not used in deriving the bounds, it is challenging to determine if the bounds are sharp given this new information. We show in this paper that the covariate-averaged Balke-Pearl bounds in this setting are in fact sharp in a specific setting with all binary observed variables.

A recent paper by \citet{levis2023covariate} proves that even beyond the setting described by \citet{https://doi.org/10.1002/sim.2766}, the covariate-averaged Balke-Pearl bounds are sharp when making two assumptions in the IV setting with covariates. The sharpness of the covariate-averaged Balke-Pearl bounds holds specifically under these assumptions and \emph{no additional assumptions}. As we will illustrate, this setting differs from the scenario where all variables are assumed to have some known connectivity within a causal graph, for example. 
As there is more information in the latter causal model setting, one must consider sharpness concerning that information making it more difficult to account for all of it in the formulation of the bounds. Thus, making fewer assumptions may result in wider, but still sharp bounds. 

In this article we specify the notions with the precision needed to formally show the relationship between sharpness conditional on covariates, marginal sharpness, and when covariate-averaging does provide sharp bounds even in settings where additional assumptions are made about the covariate. As we will show with examples, once one is willing to assume something about how the additional covariates fit in a DAG, the sharp bounds derived using that information will often be narrower than those obtained using covariate averaging, and they will be the same only in certain settings. This highlights that the notion of sharpness is quite elusive and riddled with potential pitfalls. In lieu of general graphical or distributional criteria for sharpness, which seem difficult to formulate even in the simplest settings, rigorous investigation within the specific setting is needed when conditioning on or averaging over covariates that were not used in the derivation of the bounds. Deriving general conditions for bounds to be sharp after manipulations such as conditioning and covariate-averaging under the largest set of plausible assumptions is a much more difficult task. It is a worthwhile effort, however, since reliable causal bounds tend to be wide, but exhaustively exploiting the assumptions may make them more informative.

\section{Preliminaries and Notation}

Let $Y$ be an outcome and $X$ be an exposure. Using the Neyman-Rubin counterfactual notation, let $Y(X = x)$ denote the value of $Y$ had, potentially counter to fact, the exposure $X$ been set to $x$. We will assume consistency and the lack of interference, such that for a given subject $i$, $(Y_i|X_i=x) = Y_i(x)$, and there is one counterfactual per subject per intervention, independent of the treatment or outcome of other subjects. 

Let $G$ be a causal model that may consist of NPSEM \citep{Pearl2000-PEAC-2} or simply a set of assumptions about the variables and setting of interest that also defines the observable data structure. Let $\boldsymbol{V} = \boldsymbol{O} \cup \boldsymbol{U}$ where $\boldsymbol{U}$ are unobserved and $\boldsymbol{O}$ are observed, and $S$ is an element of $\boldsymbol{O}$ taking values in $\mathcal{S}$. Let $\mathcal{P}$ be the set of joint distributions on $\boldsymbol{V}$ that are compatible with $G$. For a given $\DF{\boldsymbol{V}} \in\mathcal{P}$, let $\DF{\boldsymbol{U}}$ be the marginal distribution of $\boldsymbol{U}$, $o(\DF{\boldsymbol{V}})=\int \DF{\boldsymbol{V}} \, d\DF{\boldsymbol{U}}$ the mapping $o:\mathcal{P}\to\mathcal{O}$, $\mathcal{O} = \{o(\DF{\boldsymbol{V}}): \DF{\boldsymbol{V}} \in \mathcal{P}\}$, and $\DF{\boldsymbol{O}} \in \mathcal{O}$ be the marginal distribution of $\boldsymbol{O}$. For a given $\DF{\boldsymbol{O}}\in\mathcal{O}$, let $o^{-1}(\DF{\boldsymbol{O}}) = \{\DF{\boldsymbol{V}} \in \mathcal{P}: \int \DF{\boldsymbol{V}} \, d\DF{\boldsymbol{U}} = \DF{\boldsymbol{O}}\}$ be the set of distributions with the same observed margin (i.e., the inverse image of $\DF{\boldsymbol{O}}$).   

Let for each \(s\in\mathcal{S}\), $\DF{\boldsymbol{V}|s}$ be the distribution of $\boldsymbol{V}$ given $S = s$, and likewise $\DF{\boldsymbol{U}|s}$ the distribution of $\boldsymbol{U}$ given $S = s$. An observed conditional distribution is denoted $\DF{\boldsymbol{O}|s} = o_s(\DF{\boldsymbol{V}})$ which is an element of $\mathcal{O}_s = \{\int \DF{\boldsymbol{V}|s}\, d\DF{\boldsymbol{U}|s}: \DF{\boldsymbol{V}} \in \mathcal{P}\}$. The inverse image of the conditional observed distribution is $o_s^{-1}(\DF{\boldsymbol{O}|s}) = \{\DF{\boldsymbol{V}} \in \mathcal{P}: \int \DF{\boldsymbol{V}|s}\, d\DF{\boldsymbol{U}|s} = \DF{\boldsymbol{O}|s}\}$. Let \(\DF{S}\) be the marginal distribution of $S$.

Let $\theta: \mathcal{P} \rightarrow \mathbb{R}$ be the causal parameter of interest, which is assumed to be a well-defined causal quantity in the causal model. For example, 
\[\theta(\DF{\boldsymbol{V}})=\mathbb{E}[Y(X{=}1)-Y(X{=}0)]\]
and let $L: \mathcal{O} \rightarrow \mathbb{R}$ be a lower bound for $\theta$. For simplicity, we will focus on lower bounds. The definitions and deductions regarding upper bounds are analogous. Let $\theta_{s}: \mathcal{P} \rightarrow \mathbb{R}$ and $L_{s}: \mathcal{O}_s \rightarrow \mathbb{R}$ be the conditional parameter and covariate-conditional lower bound, respectively. The covariate-conditional estimand is a well-defined causal quantity. For example, 
\[\theta_{s}(\DF{\boldsymbol{V}})=\mathbb{E}_{\DF{\boldsymbol{V}}}[Y(X{=}1)-Y(X{=}0)|S=s]\] and note that \(\theta=\int \theta_{s} \, d\DF{S}(s)\). Let $\bar{L}(\DF{\boldsymbol{O}})= \int L_s(\DF{\boldsymbol{O}|s}) \, d\DF{S}(s)$ be the covariate-averaged lower bound of \(\theta\).

A summary of the mappings involved is shown in Figure \ref{fig:Mappings2}.

\begin{figure}[hbt!]
    \centering
    \begin{tikzcd}
        &\rotatebox[origin=r]{-90}{\text{Underlying distributions:}}&\rotatebox[origin=r]{-90}{\text{Observation mapping:}}&\rotatebox[origin=r]{-90}{\text{Observations:}}&\rotatebox[origin=r]{-90}{\text{Lower bound functions:}}&\\\\\\
        \rotatebox[origin=r]{0}{\text{Covariate inclusive model:}}&\mathcal{P}\arrow[rrrr,bend left=50,magenta,"\theta"]\arrow[rr,blue,"o"]\arrow[rrdd,blue,"o_s"]\arrow[rrrr,bend right=70,looseness=1.7,swap,magenta,"\theta_s"]&&\mathcal{O}\arrow[rr,bend left,teal,"L"]\arrow[rr,bend right,teal,"\bar{L}"]\arrow[dd]&&\mathbb{R}\\\rotatebox[origin=r]{0}{\text{Covariate conditioning:}}\\
        \rotatebox[origin=r]{0}{\text{Covariate conditional model:}}&&&\mathcal{O}_s\arrow[rruu,teal,swap,"L_s"]&&
    \end{tikzcd}
    \caption{Spaces and mappings of the models. Observation mappings are in {\color{blue}blue}. Lower bound functions are in {\color{teal}teal}. Estimand functions are in {\color{magenta}magenta}.}
    \label{fig:Mappings2}
\end{figure}
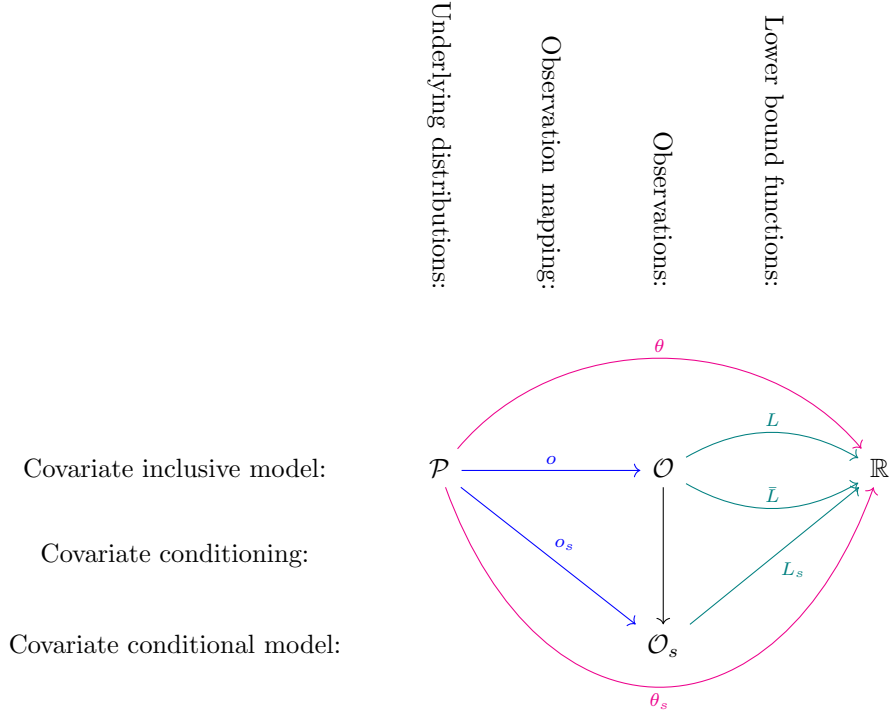

\section{Definitions}
\begin{definition}
    $L$ is a \textbf{valid lower bound} of $\theta$ under $G$, if  for all \(\DF{\boldsymbol{V}}\in\mathcal{P}\), \[L\{o(\DF{\boldsymbol{V}})\}\le\theta(\DF{\boldsymbol{V}}).\]
\end{definition}

\begin{definition}
A valid lower bound $L$ of $\theta$ is \textbf{sharp} under $G$ if for all $\DF{\boldsymbol{O}} \in \mathcal{O}$ and $\varepsilon > 0$ there exists a $\DF{\boldsymbol{V}} \in o^{-1}(\DF{\boldsymbol{O}})$ such that 
$$
\theta(\DF{\boldsymbol{V}}) < L(\DF{\boldsymbol{O}}) + \varepsilon.
$$ 
\end{definition}

Note that sharpness is a property that depends on the given context, namely the assumed causal model $G$ and the form of the observation space $\mathcal{O}$. 
Observation mappings other than $o$, which returns the full joint distribution, could be considered. For example, \citet{ramsahai2007causal} considered the ``bivariate'' observation situation in the IV setting, where the observed distributions take the form $\{P(X = x, Z = z), P(Y = y, Z = z): \mbox{ for all } x,y,z \in \{0,1\}\}$ rather than the full joint distribution of $X, Y, Z$. Although \textit{tight} and \textit{sharp} have been used interchangeably in the literature, we will use \textit{sharp} moving forward for consistency of language. 

Now that we have a definition of sharp, let $L^{co}$ $(\DF{\boldsymbol{O}})$ be the sharp lower bound for $\theta$ in $G$, where we use $co$ to indicate covariate-optimal, since $G$ contains $S$. We now consider conditional versions of valid and sharp.

\begin{definition}
    The set \(\{L_s:s\in\mathcal{S}\}\) is 
    \textbf{pointwise valid} for the set \(\{\theta_s: s \in S\}\) under $G$, if for all \(\DF{\boldsymbol{V}}\in\mathcal{P}\) and $P_S$-almost every $s\in\mathcal{S}$, \[L_s\{o_s(\DF{\boldsymbol{V}})\}\le\theta_s(\DF{\boldsymbol{V}}).\]
\end{definition}

\begin{definition}
     A pointwise valid set \(\{L_s:s\in\mathcal{S}\}\) for the set \(\{\theta_s: s \in S\}\) under $G$ is 
    \textbf{pointwise sharp} for the set \(\{\theta_s: s \in S\}\) under $G$, if for all $\DF{\boldsymbol{O}|s} \in \mathcal{O}_s$ and $P_S$-almost every $s\in\mathcal{S}$ and $\varepsilon_s > 0$, there exists a $\DF{\boldsymbol{V}} \in o_s^{-1}(\DF{\boldsymbol{O}|s})$  \[\theta_s(\DF{\boldsymbol{V}}) < L_s(\DF{\boldsymbol{O}|s}) + \varepsilon_s.\] 
\end{definition}

\begin{definition}
    A pointwise valid set \(\{L_s:s\in\mathcal{S}\}\) for the set \(\{\theta_s: s \in S\}\) under $G$ is \textbf{uniformly sharp} for the set \(\{\theta_s: s \in S\}\) under $G$ if for all $\DF{\boldsymbol{O}} \in \mathcal{O}$ and $\varepsilon > 0$ there exists a $\DF{\boldsymbol{V}} \in o^{-1}(\DF{\boldsymbol{O}})$ such that for $\DF{S}$-almost every $s \in \mathcal{S}$,
    \[\theta_s(\DF{\boldsymbol{V}}) <  L_s(\DF{\boldsymbol{O}|s}) +\varepsilon,
    \] 
where $\DF{\boldsymbol{O}|s} = o_s(\DF{\boldsymbol{V}})$ for all $s$.
\end{definition}

Note that uniform sharpness requires that there exists a single distribution $\DF{\boldsymbol{V}}$ that fulfills the pointwise sharpness requirements of all levels of $S$ simultaneously. Figure \ref{fig:diagram} illustrates the distinction. Uniform sharpness is a strictly stronger requirement than pointwise sharpness and implies pointwise sharpness. Pointwise validity and sharpness are local properties that apply to each \(s\in\mathcal{S}\) separately. In contrast, uniform sharpness is a global property that pertains to the whole set \(\mathcal{S}\) and the causal model $G$ containing the covariate $S$.

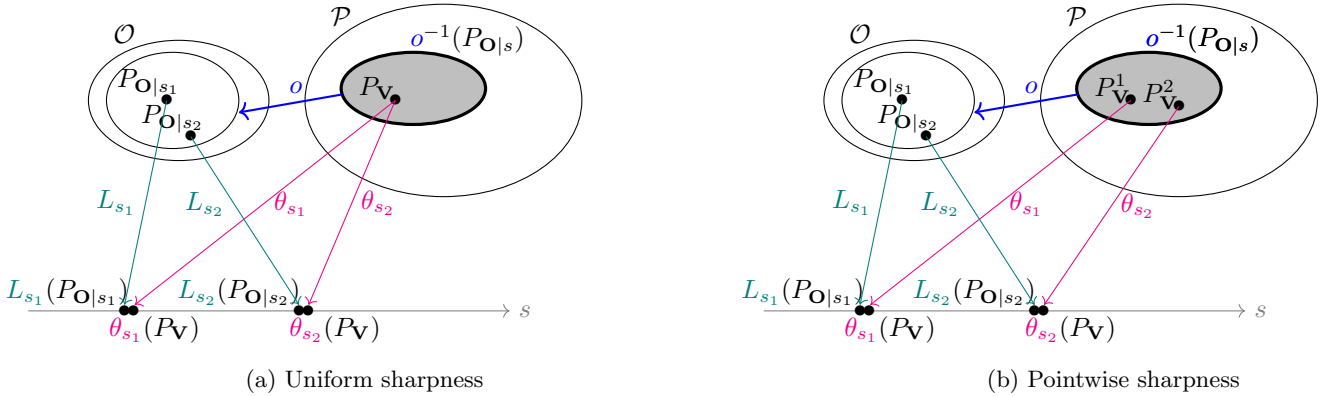
\begin{figure}[ht]
    \centering
    \begin{subfigure}[b]{0.4\textwidth}
        \centering
        \hspace*{-60pt}
        \begin{tikzpicture}[scale=0.8]
            \draw (1,6) ellipse (1.5 and 1);
            \node at (0.1,7.1) {$\mathcal{O}$};
            \node at (0.8,6.0) {$\bullet$};
            \node at (0.5,6.3) {$P_{\mathbf{O}|s_1}$};
            \node at (1.2,5.4) {$\bullet$};
            \node at (0.9,5.7) {$P_{\mathbf{O}|s_2}$};
            \draw (0.9,6.0) ellipse (1.1 and 0.8);
            \node at (3.7,7.4) {$\mathcal{P}$};
            \draw (5.4,6) ellipse (2.3 and 1.6);
            \draw[very thick,fill=lightgray] (4.9,6.2) ellipse (1.2 and 0.6);
            \node at (4.6,6.0) {$\bullet$};
            \node at (4.3,6.2) {$P_\mathbf{V}$};
            \node at (5.8,7.0) {${\color{blue}o}^{-1}(P_{\mathbf{O}|s})$};
            \draw[->,thick,blue] (3.72,6.1) -- (2.0,5.8) node[pos=0.45,above] {$o$};
            \draw[->, gray] (-1.5,2.5) -- (6.5,2.5) node[right] {$s$};
            \node at (0.1,2.5) {$\bullet$};
            \node at (-0.85,2.8) {${\color{teal}L_{s_1}}(P_{\mathbf{O}|s_1})$};
            \node at (0.25,2.5) {$\bullet$};
            \node at (0.6,2.2) {${\color{magenta}\theta_{s_1}}(P_\mathbf{V})$};
            \draw[->,teal] (0.8,6.0) -- (0.1,2.6) node[pos=0.5,left] {$L_{s_1}$};
            \draw[->,magenta] (4.6,6.0) -- (0.27,2.6) node[pos=0.4,below] {$\theta_{s_1}$};
            \node at (3.0,2.5) {$\bullet$};
            \node at (2.0,2.8) {${\color{teal}L_{s_2}}(P_{\mathbf{O}|s_2})$};
            \node at (3.15,2.5) {$\bullet$};
            \node at (3.6,2.2) {${\color{magenta}\theta_{s_2}}(P_\mathbf{V})$};
            \draw[->,teal] (1.2,5.4) -- (3.0,2.6) node[pos=0.4,left] {$L_{s_2}$};
            \draw[->,magenta] (4.6,6.0) -- (3.16,2.6) node[pos=0.5,right] {$\theta_{s_2}$};
        \end{tikzpicture}
        \caption{Uniform sharpness}
        \label{fig:uniform}
    \end{subfigure}
    \hfill
    \begin{subfigure}[b]{0.4\textwidth}
        \centering
        \hspace*{-70pt}
        \begin{tikzpicture}[scale=0.8]
            \draw (1,6) ellipse (1.5 and 1);
            \node at (0.1,7.1) {$\mathcal{O}$};
            \node at (0.8,6.0) {$\bullet$};
            \node at (0.5,6.3) {$P_{\mathbf{O}|s_1}$};
            \node at (1.2,5.4) {$\bullet$};
            \node at (0.9,5.7) {$P_{\mathbf{O}|s_2}$};
            \draw (0.9,6.0) ellipse (1.1 and 0.8);
            \node at (3.7,7.4) {$\mathcal{P}$};
            \draw (5.4,6) ellipse (2.3 and 1.6);
            \draw[very thick,fill=lightgray] (4.9,6.2) ellipse (1.2 and 0.6);
            \node at (4.6,6.0) {$\bullet$};
            \node at (4.3,6.2) {$P^1_\mathbf{V}$};
            \node at (5.4,5.9) {$\bullet$};
            \node at (5.1,6.1) {$P^2_\mathbf{V}$};
            \node at (5.8,7.0) {${\color{blue}o}^{-1}(P_{\mathbf{O}|s})$};
            \node at (5.8,7.0) {${\color{blue}o}^{-1}(P_{\mathbf{O}|s})$};
            \draw[->,thick,blue] (3.72,6.1) -- (2.0,5.8) node[pos=0.45,above] {$o$};
            \draw[->, gray] (-1.5,2.5) -- (6.5,2.5) node[right] {$s$};
            \node at (0.1,2.5) {$\bullet$};
            \node at (-0.85,2.8) {${\color{teal}L_{s_1}}(P_{\mathbf{O}|s_1})$};
            \node at (0.25,2.5) {$\bullet$};
            \node at (0.6,2.2) {${\color{magenta}\theta_{s_1}}(P_\mathbf{V})$};
            \draw[->,teal] (0.8,6.0) -- (0.1,2.6) node[pos=0.5,left] {$L_{s_1}$};
            \draw[->,magenta] (4.6,6.0) -- (0.27,2.6) node[pos=0.4,below] {$\theta_{s_1}$};
            \node at (3.0,2.5) {$\bullet$};
            \node at (2.0,2.8) {${\color{teal}L_{s_2}}(P_{\mathbf{O}|s_2})$};
            \node at (3.15,2.5) {$\bullet$};
            \node at (3.6,2.2) {${\color{magenta}\theta_{s_2}}(P_\mathbf{V})$};
            \draw[->,teal] (1.2,5.4) -- (3.0,2.6) node[pos=0.4,left] {$L_{s_2}$};
            \draw[->,magenta] (5.4,5.9) -- (3.16,2.6) node[pos=0.5,right] {$\theta_{s_2}$};
        \end{tikzpicture}
        \caption{Pointwise sharpness}
        \label{fig:conditional}
    \end{subfigure}    
    \caption{Diagram illustrating the difference between pointwise and uniform sharpness. For uniform sharpness, there needs to exist a single underlying distribution \(P_\mathbf{V}\) which maps each conditional estimand arbitrarily close to the corresponding conditional lower bound. For general pointwise sharpness, this possibly requires distinct underlying distributions for each \(s\in\mathcal{S}\). The color schemes follow those in Figure \ref{fig:Mappings2}.}
    \label{fig:diagram}
\end{figure}

\section{General Results}

We first note that pointwise validity implies valid covariate-averaged bounds.

\begin{proposition}
If the set $\{L_s:s\in\mathcal{S}\}$ is pointwise valid for \(\{\theta_s: s \in S\}\) Under $G$, then $\overline{L}$, the bounds obtained by averaging over $\{L_s:s\in\mathcal{S}\}$, is a valid lower bound for \(\theta\) under $G$.
\end{proposition}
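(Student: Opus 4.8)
The plan is to obtain the marginal inequality $\overline{L}\{o(\DF{\boldsymbol{V}})\}\le\theta(\DF{\boldsymbol{V}})$ directly by integrating the pointwise inequalities against $\DF{S}$. First I would fix an arbitrary $\DF{\boldsymbol{V}}\in\mathcal{P}$ and put $\DF{\boldsymbol{O}}=o(\DF{\boldsymbol{V}})$. The key bookkeeping observation is that the conditional observed distributions $\DF{\boldsymbol{O}|s}=o_s(\DF{\boldsymbol{V}})$ and the marginal $\DF{S}$ are all determined by $\DF{\boldsymbol{O}}$, so that by the definition of the covariate-averaged bound, $\overline{L}\{o(\DF{\boldsymbol{V}})\}=\int L_s\{o_s(\DF{\boldsymbol{V}})\}\,d\DF{S}(s)$.

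Next, pointwise validity of $\{L_s:s\in\mathcal{S}\}$ gives, for this fixed $\DF{\boldsymbol{V}}$, that $L_s\{o_s(\DF{\boldsymbol{V}})\}\le\theta_s(\DF{\boldsymbol{V}})$ for $\DF{S}$-almost every $s\in\mathcal{S}$. Integrating this inequality with respect to $\DF{S}$ — which preserves it, since it holds off a $\DF{S}$-null set — yields $\int L_s\{o_s(\DF{\boldsymbol{V}})\}\,d\DF{S}(s)\le\int\theta_s(\DF{\boldsymbol{V}})\,d\DF{S}(s)$. The right-hand side equals $\theta(\DF{\boldsymbol{V}})$ by the identity $\theta=\int\theta_s\,d\DF{S}(s)$ recorded in the Preliminaries, and the left-hand side equals $\overline{L}\{o(\DF{\boldsymbol{V}})\}$ by the previous step. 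Chaining these relations gives $\overline{L}\{o(\DF{\boldsymbol{V}})\}\le\theta(\DF{\boldsymbol{V}})$, and since $\DF{\boldsymbol{V}}\in\mathcal{P}$ was arbitrary, this is precisely the definition of $\overline{L}$ being a valid lower bound of $\theta$ under $G$.

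There is no genuine conceptual obstacle here; the proposition is essentially monotonicity of the integral, and the only care needed is measure-theoretic. In particular, one should note that the $\DF{S}$-null exceptional set on which pointwise validity may fail does not affect the integral, and — to be fully rigorous — that the integrals defining $\overline{L}$ and $\theta$ are assumed to exist and be finite (implicit in $\overline{L}:\mathcal{O}\to\mathbb{R}$ and $\theta:\mathcal{P}\to\mathbb{R}$ being well-defined real-valued maps), so that integration respects the a.e. inequality. The one point I would state explicitly is that $\overline{L}$ is a bona fide function on $\mathcal{O}$: both its integrand $L_s(\DF{\boldsymbol{O}|s})$ and the integrating measure $\DF{S}$ depend on $\DF{\boldsymbol{V}}$ only through $\DF{\boldsymbol{O}}=o(\DF{\boldsymbol{V}})$, which is what legitimizes writing $\overline{L}\{o(\DF{\boldsymbol{V}})\}=\int L_s\{o_s(\DF{\boldsymbol{V}})\}\,d\DF{S}(s)$ in the first step.
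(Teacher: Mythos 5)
Your proposal is correct and follows essentially the same route as the paper's proof: fix an arbitrary $\DF{\boldsymbol{V}}\in\mathcal{P}$, apply the pointwise inequality $L_s\{o_s(\DF{\boldsymbol{V}})\}\le\theta_s(\DF{\boldsymbol{V}})$, and integrate against $\DF{S}$ using $\theta=\int\theta_s\,d\DF{S}(s)$. Your extra remarks on the $\DF{S}$-null exceptional set and on $\overline{L}$ being well-defined on $\mathcal{O}$ are sound points of care that the paper's one-line proof leaves implicit.
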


\begin{proof}
    Let \(\DF{\boldsymbol{V}}\in\mathcal{P}\). Then \(L_s(o_s(\DF{\boldsymbol{V}}))\le\theta_s(\DF{\boldsymbol{V}})\) for all \(s\in\mathcal{S}\), so \(\bar{L}(o(\DF{\boldsymbol{V}}))=\int L_s(\DF{\boldsymbol{O}|s}) \, d\DF{S}(s)\le \int \theta_s(\DF{\boldsymbol{V}})\, d\DF{S}(s) =\theta(\DF{\boldsymbol{V}})\).
\end{proof}

\begin{proposition}
The covariate-averaged lower bound $\overline{L}$ is sharp for $\theta$ under $G$ if and only if the set \(\{L_{s}:s\in\mathcal{S}\}\) is uniformly sharp for the set \(\{\theta_s: s \in S\}\) under $G$.
\end{proposition}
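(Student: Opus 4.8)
The plan is to reduce both properties to a statement about a single nonnegative function. For $\DF{\boldsymbol{V}}\in\mathcal{P}$ put $h_{\DF{\boldsymbol{V}}}(s):=\theta_s(\DF{\boldsymbol{V}})-L_s\{o_s(\DF{\boldsymbol{V}})\}$. Since $\theta=\int\theta_s\,d\DF{S}$, since $\bar L(\DF{\boldsymbol{O}})=\int L_s(\DF{\boldsymbol{O}|s})\,d\DF{S}$, and since $o_s(\DF{\boldsymbol{V}})=\{o(\DF{\boldsymbol{V}})\}|_{S=s}$ (the downward map of Figure~\ref{fig:Mappings2}), every $\DF{\boldsymbol{V}}\in o^{-1}(\DF{\boldsymbol{O}})$ satisfies
\[
\theta(\DF{\boldsymbol{V}})-\bar L(\DF{\boldsymbol{O}})=\int h_{\DF{\boldsymbol{V}}}(s)\,d\DF{S}(s),
\]
and pointwise validity of $\{L_s\}$ makes $h_{\DF{\boldsymbol{V}}}\ge 0$ for $\DF{S}$-almost every $s$. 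In these terms, sharpness of $\bar L$ is the statement that $\inf_{\DF{\boldsymbol{V}}\in o^{-1}(\DF{\boldsymbol{O}})}\int h_{\DF{\boldsymbol{V}}}\,d\DF{S}=0$ for every $\DF{\boldsymbol{O}}\in\mathcal{O}$, while uniform sharpness is the statement that $\inf_{\DF{\boldsymbol{V}}\in o^{-1}(\DF{\boldsymbol{O}})}\operatorname{ess\,sup}_{s}h_{\DF{\boldsymbol{V}}}(s)=0$ for every $\DF{\boldsymbol{O}}\in\mathcal{O}$, and the proposition is exactly the equivalence of these two.

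The implication from uniform sharpness to sharpness of $\bar L$ is routine: given $\DF{\boldsymbol{O}}$ and $\varepsilon>0$, uniform sharpness supplies a single $\DF{\boldsymbol{V}}\in o^{-1}(\DF{\boldsymbol{O}})$ with $h_{\DF{\boldsymbol{V}}}<\varepsilon$ $\DF{S}$-a.e., and integrating against the probability measure $\DF{S}$ gives $\theta(\DF{\boldsymbol{V}})-\bar L(\DF{\boldsymbol{O}})=\int h_{\DF{\boldsymbol{V}}}\,d\DF{S}<\varepsilon$.

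For the converse, fix $\DF{\boldsymbol{O}}$ and $\varepsilon>0$, apply sharpness of $\bar L$ with a tolerance $\varepsilon'$ to be chosen, and obtain $\DF{\boldsymbol{V}}\in o^{-1}(\DF{\boldsymbol{O}})$ with $\int h_{\DF{\boldsymbol{V}}}\,d\DF{S}<\varepsilon'$. Because $h_{\DF{\boldsymbol{V}}}\ge 0$ a.e., Markov's inequality gives $\DF{S}(\{s:h_{\DF{\boldsymbol{V}}}(s)\ge\varepsilon\})<\varepsilon'/\varepsilon$. When $S$ is discrete --- the case essentially all of the partial-identification literature is concerned with --- $\DF{S}$ has finitely many atoms, and choosing $\varepsilon'<\varepsilon\min_{s:\DF{S}(\{s\})>0}\DF{S}(\{s\})$ forces that exceedance set to be $\DF{S}$-null, so $h_{\DF{\boldsymbol{V}}}<\varepsilon$ off a null set, which is uniform sharpness. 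The same conclusion holds, even more cleanly, whenever the infimum defining sharpness of $\bar L$ is attained --- e.g.\ when $o^{-1}(\DF{\boldsymbol{O}})$ is compact and $\theta$ continuous, as in the response-function polytope parametrizations behind Balke--Pearl-type bounds --- since then there is a $\DF{\boldsymbol{V}}$ with $\int h_{\DF{\boldsymbol{V}}}\,d\DF{S}=0$, hence $h_{\DF{\boldsymbol{V}}}=0$ a.e.

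I expect this converse, and specifically the passage from ``small integral'' to ``small essential supremum'', to be the only real obstacle. Outside the discrete (or inf-attained) setting it is genuinely subtle: a small integral of a nonnegative function does not control its essential supremum, and $h_{\DF{\boldsymbol{V}}}$ can be large on a set of small but positive $\DF{S}$-measure, so a single distribution achieving the conditional optima simultaneously need not exist even when the average is forced to its optimum. Recovering the conclusion for general continuous $S$ appears to require an extra hypothesis that the model $G$ (equivalently the fiber $o^{-1}(\DF{\boldsymbol{O}})$) is closed under recombining its elements stratum-by-stratum in $s$; one then takes $\DF{\boldsymbol{V}}^n$ with $\int h_{\DF{\boldsymbol{V}}^n}\,d\DF{S}\to 0$, uses Borel--Cantelli to see that $\DF{S}$-a.e.\ $s$ eventually leaves the bad sets $\{h_{\DF{\boldsymbol{V}}^n}\ge\varepsilon\}$, and splices the $\DF{\boldsymbol{V}}^n$ along $s$ into one admissible witness. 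My recommendation would be to state the proposition for discrete $S$ (or under such a closure property) and to flag the continuous case explicitly.
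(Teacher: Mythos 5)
Your reduction to the gap function $h_{\DF{\boldsymbol{V}}}(s)=\theta_s(\DF{\boldsymbol{V}})-L_s\{o_s(\DF{\boldsymbol{V}})\}$ is the right move, and the two directions you give are the natural ones and match how the paper sets up the definitions: uniform sharpness yields a single witness with $h_{\DF{\boldsymbol{V}}}<\varepsilon$ for $\DF{S}$-almost every $s$, which integrates to sharpness of $\bar L$; and sharpness of $\bar L$ yields a witness with small $\int h_{\DF{\boldsymbol{V}}}\,d\DF{S}$, from which a small essential supremum must be extracted. Your Markov-inequality step does this correctly, but the hypothesis you need is that $\DF{S}$ has \emph{finite} support (or at least that its positive atom masses are bounded away from zero), not merely that $S$ is discrete: for a countably infinite support with masses $2^{-k}$ the quantity $\min_{s:\DF{S}(\{s\})>0}\DF{S}(\{s\})$ is an infimum equal to zero, no admissible $\varepsilon'$ exists, and the obstruction you identify in the continuous case reappears. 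With that correction the converse is complete in the finite-support setting, which covers every example in the paper (binary or categorical $S$).

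Your closing caveat is a genuine observation rather than a defect of your argument: as a purely abstract statement about a family of nonnegative functions indexed by the fiber $o^{-1}(\DF{\boldsymbol{O}})$, $\inf\int h\,d\DF{S}=0$ does not imply $\inf\operatorname{ess\,sup}_s h=0$, so the ``only if'' direction really does lean on either finiteness of $\mathcal{S}$ or a structural property of the model --- attainment of the infimum, or closure of $o^{-1}(\DF{\boldsymbol{O}})$ under stratum-wise recombination in $s$, which is what makes your Borel--Cantelli splicing go through. The proposition is stated for general $\mathcal{S}$ without flagging this, so your recommendation to restrict to finite support or to make the recombination hypothesis explicit is worth keeping, and is the one substantive point on which your treatment is more careful than the statement it is proving.
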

A proof is provided in the supplement.

Although not directly stated, Theorem 1 of \citet{levis2023covariate} uses a similar idea, however, in a very specific setting and without highlighting the differences between pointwise and uniform sharpness. In contrast, Proposition 2 is a more general statement that applies to all settings, not just the binary IV setting.  

Similar statements about the sharpness of the covariate-averaged bounds for specific covariate-conditional bounds defined by graphical properties are more difficult to make. We consider several examples for illustration. 

\subsection{Example}

\begin{figure}[h]

\begin{subfigure}[b]{.2\textwidth}
    \begin{tikzpicture}[scale=1.0]
    \node[draw, circle] at (2, 0) (X) {\(X\)};
    \node[draw, circle] at (4, 0) (Y) {\(Y\)};
    \node[draw, circle, dashed] at (3,2) (U) {\(U\)};
    \draw [-{Latex[scale=2.0]}] (X) to (Y);
   
    \draw [-{Latex[scale=2.0]}] (U) to (X);
    \draw [-{Latex[scale=2.0]}] (U) to (Y);
    \end{tikzpicture}
    \caption{Full confounded DAG}
    \label{fig:DAGcon0}
    \end{subfigure}
    \hspace{0.5cm}
    \begin{subfigure}[b]{.2\textwidth}
    \begin{tikzpicture}[scale=1.0]
    \node[draw, circle] at (2, 0) (X) {\(X\)};
    \node[draw, circle] at (4, 0) (Y) {\(Y\)};
    \node[draw, circle] at (0, 0) (S) {$S$};
    \node[draw, circle, dashed] at (3,2) (U) {\(U\)};
    \draw [-{Latex[scale=2.0]}] (X) to (Y);
    \draw [-{Latex[scale=2.0]}] (S) to (X);
    \draw [-{Latex[scale=2.0]}] (U) to (X);
    \draw [-{Latex[scale=2.0]}] (U) to (Y);
    \end{tikzpicture}
    \caption{$S$ model 1}
    \label{fig:DAGcon}
    \end{subfigure}
    \hspace{2cm}
    \begin{subfigure}[b]{.2\textwidth}
    \begin{tikzpicture}[scale=1.0]
    \node[draw, circle] at (0, 0) (X) {\(X\)};
    \node[draw, circle] at (4, 0) (Y) {\(Y\)};
    \node[draw, circle] at (4, 2) (S) {$S$};
    \node[draw, circle, dashed] at (0,2) (U) {\(U\)};
    \draw [-{Latex[scale=2.0]}] (X) to (Y);
    \draw [-{Latex[scale=2.0]}] (U) to (X);
    \draw [-{Latex[scale=2.0]}] (U) to (Y);
    \draw [-{Latex[scale=2.0]}] (U) to (S);
    \draw [-{Latex[scale=2.0]}] (S) to (Y);
    \draw [-{Latex[scale=2.0]}] (S) to (X);
    \end{tikzpicture}
    \caption{$S$ model 2}
    \label{fig:DAGcon2}
    \end{subfigure}
    \caption{DAG models used for illustration.}
\end{figure}
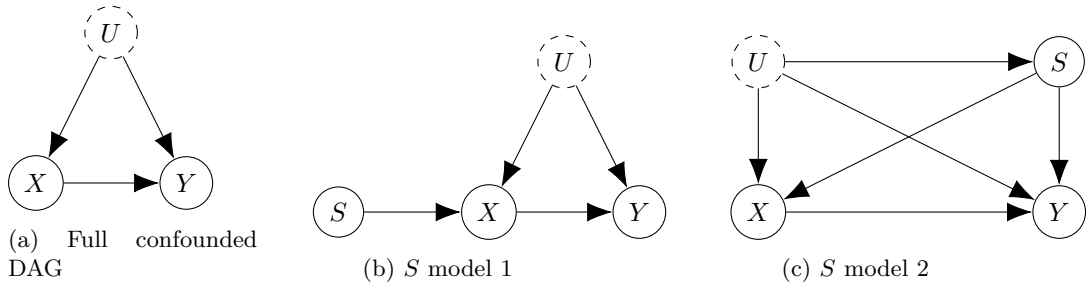
 
    Consider the DAG depicted in Figure \ref{fig:DAGcon0} with \(X\) and \(Y\) binary, with assumptions encoded by NPSEM, given by the DAG. Given an underlying distribution \(\DF{\boldsymbol{V}}\) over \(\boldsymbol{V} = (X,Y,U)\), we observe the distribution \(\DF{\boldsymbol{O}} = o(\DF{\boldsymbol{V}})\) over \(\boldsymbol{O} = (X,Y)\) and \[L(\DF{\boldsymbol{O}}):=-P(X=0,Y=1)-P(X=1,Y=0)\] is a valid lower bound on the population average causal risk difference \(\theta=\mathbb{E}_{\DF{\boldsymbol{V}}}[Y(X{=}1)-Y(X{=}0)]\) that is also well-known to be sharp \citep{robins1989analysis}.
    
Consider now $S$ model 1 and $S$ model 2 depicted in Figures \ref{fig:DAGcon} and \ref{fig:DAGcon2} and the NPSEM they encode, respectively. In both $S$ models 1 and 2, for any value $s$ of $S$, the conditional bound corresponding to $L(\DF{\boldsymbol{O}})$ above, is \(L_s(\DF{\boldsymbol{O}|s})=-P(X=0,Y=1|S=s)-P(X=1,Y=0|S=s)\). Each $L_s(\DF{\boldsymbol{O}|s})$ is sharp, so the set $\{L_s:s \in S\}$ is pointwise sharp under both $S$ models 1 and 2. 

In $S$ model 1, the set $\{L_s:s \in \mathcal{S}\}$ is not uniformly sharp, and the covariate-averaged bound is not sharp for $\theta$. The latter is immediately clear when one considers that if one observes all joint distributions for $S$ model 1, for a binary $S$, the sharp bounds for $\theta$ are the Balke-Pearl bounds. As $\theta = \theta_s$ for all $s$, the lack of uniform sharpness can be seen by positing a $\DF{\boldsymbol{O}}$ under $S$ model 1 such that $\DF{\boldsymbol{O}|s}$ is $P(X=0,Y=1|S=1)=1$ and $P(X=1,Y=1|S=0)=1$, making $L_1=-1$ and $L_0=0$ and thus, the common value of $\theta_0=\theta_1=\theta$ cannot generally attain these bounds at the same time. 

In S model 2, consider a binary $S$, the set $\{L_s:s \in \mathcal{S}\}$ is both pointwise and uniformly sharp. Thus, given \(\DF{\boldsymbol{O}}\in\mathcal{O}\) there exists a \(\DF{\boldsymbol{V}}\in o^{-1}(\DF{\boldsymbol{O}})\) such that \(\bar{L}(\DF{\boldsymbol{O}})=P(X=0,Y=1)-P(X=1,Y=0)\), so the covariate-averaged bounds are sharp for $\theta$. However, \(\bar{L}(\DF{\boldsymbol{O}})=L(\DF{\boldsymbol{O}})\), so although \(\bar{L}(\DF{\boldsymbol{O}})\) is sharp, it is not narrower than $L(\DF{\boldsymbol{O}})$ and thus covariate averaging does not improve the width of the bounds. 

We can show that \(\bar{L}(\DF{\boldsymbol{O}})=L(\DF{\boldsymbol{O}})=L^{co}(\DF{\boldsymbol{O}})\) using \texttt{causaloptim} \citep{WOS:001216733200016}. In these settings the linear programming method used in \texttt{causaloptim} has been proven to be sharp \citep{WOS:000801008700001}. Thus $L^{co}$ in $S$ model 2 for a binary $S$ is given by:
\begin{eqnarray*}
      -P(X=1,Y=0,S=0)-P(X=1,Y=0,S=1)- \\ P(X=0,Y=1,S=0)-P(X=0,Y=1,S=1),
\end{eqnarray*} which by simplification is equal to $L(\DF{\boldsymbol{O}})=\bar{L}(\DF{\boldsymbol{O}})$. Thus, the covariate-averaged, the covariate-optimal, and the covariate-marginal bounds all coincide in this case.

\section{IV setting}

\subsection{Preliminaries and Notation}
 Suppose that \(X\), \(Y\), and $Z$ are measured binary variables, and the unmeasured set of variables $\boldsymbol{U}$ is unrestricted, e.g., continuous or multilevel categorical, or a combination. We will consider $\theta$ and $\theta_s$, to be the causal risk difference and the causal risk difference conditional on $s$, and the set of causal estimands \(\{\theta_s: s \in S\}\) is obtained by conditional on all levels of $S$. There are three commonly cited core conditions for \(Z\) to be a valid IV for \((X,Y)\): \(Z\not\indep X\) (relevance), \(Z\indep Y|\{X,U\}\) (exclusion restriction) and \(Z\indep U\) (latent independence) \citep{didelez2010assumptions}, however, other assumptions have been used in the literature. The IV model has been extensively studied and applied in many contexts.
Notably, \citet{doi:10.1080/01621459.2018.1434530} contains a thorough investigation of the case where $Z$, $X$ and $Y$ are binary, covering and comparing commonly used sets of assumptions in different causal paradigms and deriving bounds under each. We will make the following assumptions.

\noindent\textbf{Setting Assumptions}
    \begin{enumerate}
        \item \(Z\indep \{Y(z), X(z)\}\mid S{=}s\) and 
        \item  $Y(x,z) = Y(x) $ for all \(z\in\mathcal{Z},x\in\mathcal{X}\)
        \end{enumerate}

Assumptions 3 and 4 of \citet{levis2023covariate} are our Assumptions 1 and 2; they additionally invoke consistency and no interference, as we do, but also positivity.

We now specifically define $L_s$ as the Balke-Pearl bounds, i.e. all binary setting IV bounds, where the probabilities in the bounds expressions are all conditional on $S = s$. Then the set of bounds $\{L_{s} : s \in S\}$ is obtained by conditioning on all levels of $S$. Let covariate-average bound $\overline{L}$ be the bounds obtained by averaging over the set $\{L_{s} : s \in S\}$ based on the Balke-Pearl bounds as defined above. Let $\mathcal{G}$ be the set of causal models $\{G\}$ on variables $\{X,S,Y,Z,U\}$ where  Assumptions 1-2, consistency, and no interference hold.

\subsection{Results}
\begin{proposition}
The set \(\{L_{s}: s \in S\}\) based on the Balke-Pearl bounds are pointwise valid for the set $\{\theta_s: s \in \mathcal{S}\}$ under $G$ for any $G\in \mathcal{G}$.
\end{proposition}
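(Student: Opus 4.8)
The plan is to verify the two defining conditions of pointwise validity (Definition~3) for the set $\{L_s : s \in S\}$: namely that for each $s \in \mathcal{S}$ and every $\DF{\boldsymbol{V}} \in \mathcal{P}$, the quantity $L_s\{o_s(\DF{\boldsymbol{V}})\}$ is a genuine lower bound on $\theta_s(\DF{\boldsymbol{V}})$. The key observation is that the Setting Assumptions are stated \emph{conditionally on $S = s$}: Assumption~1 gives $Z \indep \{Y(z), X(z)\} \mid S = s$, and Assumption~2 is the exclusion restriction $Y(x,z) = Y(x)$, which does not involve $S$ at all. Together with consistency and no interference, these are precisely the assumptions under which \citet{doi:10.1080/01621459.2018.1434530} derive the Balke-Pearl bounds, but now applied within the subpopulation $S = s$.

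First I would fix an arbitrary $s \in \mathcal{S}$ and an arbitrary $\DF{\boldsymbol{V}} \in \mathcal{P}$, and pass to the conditional law $\DF{\boldsymbol{V} \mid s}$. I would then check that this conditional law satisfies the binary-IV assumptions used to derive the Balke-Pearl bounds: $Z$, $X$, $Y$ binary (given); exclusion restriction $Y(x,z) = Y(x)$ (Assumption~2, which holds unconditionally hence also after conditioning on $S=s$); latent independence / randomization in the form $Z \indep \{Y(z), X(z)\} \mid S = s$ (Assumption~1); and consistency and no interference (assumed throughout, and stable under conditioning). Since the Balke-Pearl bounds are valid lower bounds on the average causal risk difference in any population satisfying exactly these assumptions, applying that result to the subpopulation indexed by $S = s$ yields
\[
L_s\{o_s(\DF{\boldsymbol{V}})\} \;\le\; \mathbb{E}_{\DF{\boldsymbol{V}}}[\,Y(X{=}1) - Y(X{=}0) \mid S = s\,] \;=\; \theta_s(\DF{\boldsymbol{V}}),
\]
where $o_s(\DF{\boldsymbol{V}}) = \DF{\boldsymbol{O} \mid s}$ supplies exactly the conditional joint probabilities $P(X=x, Y=y, Z=z \mid S=s)$ that appear in the Balke-Pearl expressions for $L_s$. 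Since $s$ and $\DF{\boldsymbol{V}}$ were arbitrary (in particular the argument goes through for $\DF{S}$-almost every $s$, which is all that Definition~3 requires), this establishes pointwise validity.

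The main obstacle is not any hard calculation but a bookkeeping point: one must confirm that the Balke-Pearl validity argument, as stated in \citet{doi:10.1080/01621459.2018.1434530}, relies on \emph{no} assumption beyond those in our Setting Assumptions (plus consistency/no interference) — in particular that it does not secretly use relevance ($Z \not\indep X$) for validity (relevance only affects whether the bounds are informative, not whether they are valid), and that no structural assumption linking $S$ to the other variables is needed. Equivalently, one invokes the remark in the excerpt that our Assumptions~1--2 coincide with Assumptions~3--4 of \citet{levis2023covariate}, so their stratum-wise bounds are exactly our $L_s$; the only care needed is that we do not additionally assume positivity, but positivity is not used in establishing validity of the bounds (it matters for estimation), so its absence is harmless here. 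Once this is checked, the proposition follows directly, since the class $\mathcal{G}$ is defined precisely so that every $G \in \mathcal{G}$ makes the stratum-wise binary-IV assumptions hold.
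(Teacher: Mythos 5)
Your proposal is correct and follows essentially the same route as the paper's supplementary proof: conditioning on $S=s$ reduces each stratum to a standard binary IV model satisfying exactly the assumptions under which the Balke--Pearl bounds are known to be valid, so applying that validity result stratum by stratum gives $L_s\{o_s(\DF{\boldsymbol{V}})\}\le\theta_s(\DF{\boldsymbol{V}})$ for $\DF{S}$-almost every $s$. The only caveat worth flagging is that without positivity the conditional probabilities $P(Y{=}y,X{=}x\mid Z{=}z,S{=}s)$ appearing in $L_s$ may be undefined (not merely uninformative), so a convention for such strata is needed; this does not affect the substance of the argument.
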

A proof is in the Supplementary Materials. 

\begin{corollary}
The covariate-averaged lower bound $\overline{L}$ averaged over \(\{L_{s}: s \in \mathcal{S}\}\) based on the Balke-Pearl bounds is valid for $\theta$ under $G$ for any $G\in\mathcal{G}$.  \end{corollary}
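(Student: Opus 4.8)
The plan is to obtain this as an immediate consequence of Proposition 3 together with Proposition 1. Proposition 3 asserts that for every $G \in \mathcal{G}$ the family $\{L_s : s \in \mathcal{S}\}$ of Balke--Pearl bounds computed from the $S{=}s$ conditional distributions is pointwise valid for $\{\theta_s : s \in \mathcal{S}\}$ under $G$; that is, for every $\DF{\boldsymbol{V}} \in \mathcal{P}$ and $\DF{S}$-almost every $s$, $L_s(o_s(\DF{\boldsymbol{V}})) \le \theta_s(\DF{\boldsymbol{V}})$. Proposition 1, applied to this $G$, then yields $\overline{L}(o(\DF{\boldsymbol{V}})) = \int L_s(\DF{\boldsymbol{O}|s}) \, d\DF{S}(s) \le \int \theta_s(\DF{\boldsymbol{V}}) \, d\DF{S}(s)$, and since the preliminaries record the identity $\theta = \int \theta_s \, d\DF{S}(s)$ (which holds because each $\theta_s$ is the $S{=}s$ conditional version of the causal risk difference, so the contrast averages over $S$), the right-hand side equals $\theta(\DF{\boldsymbol{V}})$. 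Hence $\overline{L}$ is a valid lower bound for $\theta$ under every $G \in \mathcal{G}$.

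The only points needing a sentence of care are that the integrals are well defined --- each $L_s$ is a bounded function of the conditional cell probabilities, hence measurable and $\DF{S}$-integrable, and each $\theta_s$ takes values in $[-1,1]$ --- and that the phrase ``valid for $\theta$ under $G$ for any $G \in \mathcal{G}$'' just means the displayed inequality holds for every $\DF{\boldsymbol{V}} \in \mathcal{P}$ and every $G \in \mathcal{G}$, which is exactly what the two-step chain above delivers uniformly in $G$. One should also note that the $\DF{S}$-almost-everywhere qualifier in pointwise validity is harmless for the averaging step, since a $\DF{S}$-null set of $s$ changes neither $\int L_s \, d\DF{S}$ nor $\int \theta_s \, d\DF{S}$.

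There is no genuine obstacle: the mathematical content is entirely carried by Proposition 3, whose proof is deferred to the Supplementary Materials, and by the already-established Proposition 1; the corollary is pure assembly. If anything, the part warranting the most attention is bookkeeping around the measure-theoretic qualifiers, and as just noted this causes no difficulty.
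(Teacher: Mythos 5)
Your proof is correct and takes exactly the route the paper does: the corollary is stated in the paper as following immediately from Proposition 3 combined with Proposition 1, which is precisely your assembly. The additional remarks on measurability and the $\DF{S}$-null-set qualifier are harmless elaboration and do not change the argument.
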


\noindent This immediately follows from Proposition 3 and Proposition 1.\\

This might seem to imply by the same reasoning that the set of lower bounds \(\{L_{s}: s \in \mathcal{S}\}\)  will also be pointwise sharp for the conditional estimands, since the Balke-Pearl bounds are also known to be sharp. However, as outlined above, sharpness of the lower bound is specific to the causal model. Conditioning on $S$ may imply additional constraints on the counterfactuals or observables not included in the Balke-Pearl optimization, if one is willing to assume something about $S$ other than Assumptions 1 and 2, which is generally the case. \\

\noindent\textbf{Theorem 1 of \citet{levis2023covariate}} restated in terms of our pointwise and uniform sharpness concepts states that 
for a causal model ${G}$ that makes their assumptions 1-4 and no other assumptions, the set \(\{L_{s}: s \in \mathcal{S}\}\), based on the Balke-Pearl bounds, will be pointwise and uniformly sharp for $\{\theta_s: s \in S\}$ under ${G}$.

We highlight that we do not state for each $G$ in $\mathcal{G}$ here, as once one assumes anything about $S$ in the causal models, for example, that $S$ is an additional IV as in Figure \ref{fig:DAGcon}, this theorem no longer holds as stated. To see the difference, we go back to the definition of uniformly sharp. 

The causal model defined by a DAG where $S$ has no connections in the DAG, i.e. $S\indep\mathbf{V}\smallsetminus\{S\}$, is included in the set of models $\mathcal{G}$ and thus is part of their causal model $G$ where their assumptions 1-4 hold and nothing else. As for any distribution consistent with this causal model, neither $\theta_s$ nor $L_s$ depend on $s$, and $\theta_s=\theta$, $L_s=L$, where $L$ is the Balke-Pearl bound, which is sharp in this case. By the definition of sharpness, only one unobserved distribution that attains the bound needs to exist for each observed distribution, and thus if there is such a distribution then the bound is sharp under $G$. Although in the assumed setting of \citet{levis2023covariate}, the set of bounds  \(\{L_{s}: s \in \mathcal{S}\}\) will be both pointwise and uniformly sharp under their causal model $G$ only defined by their assumptions 1-4, pointwise sharpness does not in general imply uniform sharpness. We have demonstrated this in general and will exemplify it below in the IV setting.

This does not mean that the covariate-averaging will not improve the width of the bounds, making them narrower while remaining valid in many cases, but they may not be sharp under the causal model. 
\citet{https://doi.org/10.1002/sim.2766} showed that in the IV setting, for all binary observed $\{Z,X,Y\}$ and categorical $S$, if \(S\indep Z\), there exist distributions \(\DF{\boldsymbol{V}}\in\mathcal{P}\) for which the covariate-averaged bounds obtained from the covariate-conditional lower and upper bounds are narrower than the Balke-Pearl bounds ignoring $S$, and that they are never wider. We demonstrate this in simulations. However, \citet{https://doi.org/10.1002/sim.2766} did not mention the sharpness of these bounds. 

We show in the following Proposition that in the setting of Figure \ref{fig:simset5}, for a binary $S$, the covariate-averaged Balke-Pearl bounds are sharp. 

We need to define a specific causal model, $G'$ to be the set of NPSEMs encoded by the DAG in Figure \ref{fig:simset5}, with variables $\{X,S,Y,Z,U\}$ where all but $U$ are observed and among the observed all are binary. These are as follows:
\begin{eqnarray*} 
z &=& g_{Z}(\epsilon_{z}) \\ 
s &=& g_{S}(\boldsymbol{u}, \epsilon_{s})\\
x &=& g_{X}(\boldsymbol{u}, z, s, \epsilon_{x}) \\
y &=& g_{Y}(\boldsymbol{u}, x, s, \epsilon_y). 
\end{eqnarray*}

\begin{proposition}
 The set \(\{L_{s}: s \in \mathcal{S}\}\), which are the Balke-Pearl bounds conditioned on $S=s$, are uniformly sharp for the set \(\{\theta_{s}: s \in \mathcal{S}\}\) under $G'$ and the lower covariate-averaged bound $\overline{L}$ is valid and sharp for $\theta$ under $G'$.   
\end{proposition}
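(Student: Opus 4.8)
By Proposition 2, it suffices to show that the set $\{L_s : s \in \mathcal{S}\}$ is uniformly sharp for $\{\theta_s : s \in \mathcal{S}\}$ under $G'$; validity of $\overline{L}$ then follows from Proposition 3 (or Proposition 1) and sharpness of $\overline{L}$ is then exactly the content of Proposition 2. So the plan is to fix an arbitrary observed distribution $\DF{\boldsymbol{O}}\in\mathcal{O}$ over the binary vector $(Z,X,S,Y)$ and an $\varepsilon>0$, and construct a single $\DF{\boldsymbol{V}}\in o^{-1}(\DF{\boldsymbol{O}})$ compatible with the NPSEM of Figure \ref{fig:simset5} such that, for each level $s$, the conditional risk difference $\theta_s(\DF{\boldsymbol{V}})$ is within $\varepsilon$ of the conditional Balke--Pearl lower bound $L_s(\DF{\boldsymbol{O}|s})$ evaluated at the conditional margin.

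The key structural observation I would exploit is that $G'$ is ``stratified'' by $S$ in a convenient way: conditional on $S=s$, the DAG in Figure \ref{fig:simset5} restricted to $\{Z,X,Y,U\}$ is just the ordinary binary IV DAG, and Assumption 1 gives $Z \indep \{Y(z),X(z)\}\mid S=s$ while Assumption 2 gives the exclusion restriction. Hence the classical Balke--Pearl sharpness result applies \emph{within each stratum}: for each $s$ there is a stratum-specific response-type distribution (a law for the canonical counterfactual types of $X$ and $Y$, together with $Z$) reproducing $\DF{\boldsymbol{O}|s}$ and achieving $\theta_s$ to within $\varepsilon$. The first step is therefore to invoke stratum-wise Balke--Pearl sharpness to obtain, for $\DF{S}$-almost every $s$, such a conditional law $Q_s$ on $(Z, \text{types of }X, \text{types of }Y)$.

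The second and decisive step is to glue the family $\{Q_s\}$ into one joint law over $\boldsymbol{V}=(Z,X,S,Y,U)$ that still lies in $\mathcal{P}(G')$, i.e. respects the factorization $z=g_Z(\epsilon_z)$, $s=g_S(\boldsymbol{u},\epsilon_s)$, $x=g_X(\boldsymbol{u},z,s,\epsilon_x)$, $y=g_Y(\boldsymbol{u},x,s,\epsilon_y)$ with $Z\indep\boldsymbol{U}$ (note $g_Z$ depends only on exogenous noise, so $Z$ is exogenous, consistent with Assumption 1 holding for every $s$). The construction: take $\boldsymbol{U}$ rich enough to encode $(S, \text{type of }X, \text{type of }Y)$ — concretely let $\boldsymbol{U}=(S, r_X, r_Y)$ where $r_X$ ranges over the four type-functions $z\mapsto x$ allowed given $s$ and $r_Y$ over the two type-functions $x\mapsto y$ given $s$ — and set the marginal $\DF{\boldsymbol{U}}$ so that $\Pr(S=s)=\DF{S}(s)$ and, conditionally on $S=s$, the pair $(r_X,r_Y)$ has the marginal that $Q_s$ assigns to the types. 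Let $Z$ be generated independently with the correct observed marginal $\Pr(Z=z)$ (it is the same across strata because $Z$ is exogenous; if the target $Q_s$ needs stratum-varying $Z$-weights one must first replace each $Q_s$ by a coupling with the common exogenous $Z$ — see below). Define $g_X,g_Y$ to simply apply the stored response functions: $x=r_X(z,s)$, $y=r_Y(x,s)$. One then checks that the induced law of $(Z,X,S,Y)$ restricted to $S=s$ equals $\DF{\boldsymbol{O}|s}$ (by $Z\indep(r_X,r_Y)\mid S$ and the defining property of $Q_s$), hence the overall observed margin is $\DF{\boldsymbol{O}}$, so $\DF{\boldsymbol{V}}\in o^{-1}(\DF{\boldsymbol{O}})$; and that $\theta_s(\DF{\boldsymbol{V}})=\mathbb{E}[r_Y(1,s)-r_Y(0\cdot,\cdot)\mid S=s]$ — more precisely the stratum-$s$ average of $Y(1)-Y(0)$ under $Q_s$ — is within $\varepsilon$ of $L_s$, uniformly, which is exactly uniform sharpness.

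The main obstacle, and the place that needs genuine care, is the compatibility between the \emph{common} exogenous $Z$ (forced by $z=g_Z(\epsilon_z)$ in $G'$) and the stratum-specific optimizers $Q_s$: the Balke--Pearl construction within stratum $s$ a priori only constrains the \emph{conditional} law of the types given $(S=s, Z=z)$ via $\DF{\boldsymbol{O}|s}$, and since $Z\indep U$ must hold marginally while $S$ is a descendant of $U$, one needs $Z\indep \boldsymbol{U}$, i.e. $Z\indep (S,r_X,r_Y)$ jointly — not merely $Z\indep(r_X,r_Y)\mid S$. One must verify this is not an extra restriction: because $Z$ is assigned purely from $\epsilon_z$ with no arrow into it, its marginal is the observed $\Pr(Z=z)$, the same in every stratum (this is itself implied by $Z\indep\{X(z),Y(z)\}\mid S$ together with $Z$ being a root — one should confirm $\Pr(Z=z\mid S=s)$ is in fact $s$-free for any $\DF{\boldsymbol{O}}\in\mathcal{O}(G')$, which is where the specific arrow structure of Figure \ref{fig:simset5}, in particular the \emph{absence} of $U\to Z$ and of $S\to Z$, is used), and then one builds $Q_s$ using that common $Z$-marginal so that the product structure $\DF{Z}\otimes \DF{S}\otimes \DF{(r_X,r_Y)\mid S}$ genuinely reproduces each $\DF{\boldsymbol{O}|s}$. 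Once this consistency of the $Z$-margin across strata is nailed down, the gluing is routine and the rest follows from the within-stratum Balke--Pearl sharpness theorem. A secondary, minor point to address is the role of positivity (their Assumption, not explicitly in $G'$): it guarantees $\DF{\boldsymbol{O}|s}$ is well defined for $\DF{S}$-almost every $s$ and that averaging $\theta_s$ over $\DF{S}$ recovers $\theta$, so the reduction via Proposition 2 is legitimate; if positivity is not assumed one restricts all ``for all $s$'' statements to the support of $\DF{S}$ throughout, which the definitions of pointwise/uniform sharpness already accommodate via the ``$\DF{S}$-almost every'' qualifier.
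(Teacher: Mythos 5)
Your proposal is correct, but it takes a genuinely different route from the paper. You prove uniform sharpness \emph{directly}: for each stratum $s$ you invoke the classical Balke--Pearl attainment result to get a response-type law $Q_s$ reproducing $\DF{\boldsymbol{O}|s}$ and attaining $L_s$, and then you glue these into a single $\DF{\boldsymbol{V}}\in o^{-1}(\DF{\boldsymbol{O}})$ by taking $\boldsymbol{U}=(S,r_X,r_Y)$ with $Z$ drawn independently; you correctly isolate the one point where the structure of $G'$ is genuinely used, namely that $Z$ is a root with no arrow from $U$ or $S$, so $Z\indep S$ in every observed margin and the common exogenous $Z$-marginal is compatible with all the stratum-wise optimizers simultaneously, giving $Z\indep(S,r_X,r_Y)$ jointly and not merely conditionally. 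The paper instead argues indirectly: it computes the covariate-optimal sharp lower bound $L^{co}$ under $G'$ via the linear-programming machinery of \texttt{causaloptim} (whose sharpness is guaranteed by \citet{WOS:000801008700001}) and then verifies algebraically in the supplement that $\bar{L}=L^{co}$, after which Proposition 2 delivers uniform sharpness. Your construction is more self-contained and explanatory --- it shows \emph{why} the strata decouple --- and extends naturally beyond binary $S$, whereas the paper's route leans on existing software and theory and produces the explicit $16$-term closed form of the sharp bound as a by-product. Two small points to tidy up: there are four (not two) response types $x\mapsto y$ for binary $X,Y$; and since the stratum-wise LP attains its optimum exactly in this finite discrete setting, the $\varepsilon$-bookkeeping, while harmless, is not actually needed.
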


Although the complete proof is provided in the Supplementary Materials,  we demonstrate that the covariate-averaged bounds coincide with known covariate-optimal bounds in $G'$.

To simplify notation, we will consider the lower bound on \(\theta(\DF{\boldsymbol{V}}):=\mathbb{E}_P[Y(X{=}0)]\). The upper bound follows an analogous pattern, as do bounds on \(\mathbb{E}_P[Y(X{=}1)]\), and bounds on the causal risk difference can then be obtained from these components.

For compact notation, let for each \(x,y,z\in\{0,1\}\), \(p_{yx.z}:=P(Y{=}y,X{=}x|Z{=}z)\) and for each \(s\in\{0,1\}\), \(p_{yxs.z}:=P(Y{=}y,X{=}x,S{=}s|Z{=}z)\), \(p_{yx.zs}:=P(Y{=}y,X{=}x|Z{=}z,S{=}s)\), \(p_{s.z}:=P(S{=}s|Z{=}z)\) and \(p_{s}:=P(S{=}s)\).

 \(o(\DF{\boldsymbol{V}\setminus S})\) is the distribution of \((X,Y,Z)\) obtained from \(\DF{\boldsymbol{V}}\) by marginalizing out \(S\) and
\[\begin{split}
    A^{-S}:=\{&p_{10.1},\\
              -&p_{00.0}-p_{01.0}+p_{10.1}+p_{01.1},\\
               &p_{10.0}+p_{01.0}-p_{00.1}-p_{01.1},\\
               &p_{10.0}\}
\end{split}\]
then \(L(o(\DF{\boldsymbol{V}\setminus S})):=\max(A^{-S})\) is a sharp lower bound of \(\theta(\DF{\boldsymbol{V}})\) under $G$ in \citep{Balke2011NonparametricBO}.

Let for each \(s\in\{0,1\}\)
\[\begin{split}
    A^{|s}:=\{&p_{10.1s},\\
          -&p_{00.0s}-p_{01.0s}+p_{10.1s}+p_{01.1s},\\
           &p_{10.0}+p_{01.0}-p_{00.1}-p_{01.1s},\\
           &p_{10.0s}\}
\end{split}\]
so \(L(o(\DF{\boldsymbol{V}|s}))=\max(A^{|s})\) is a sharp lower bound of \(\theta_{s}=\mathbb{E}_{P}[Y(X{=}0)|S{=}s]\) and let \(\bar{L}({o(\DF{\boldsymbol{V}})}):=\sum_{s\in\{0,1\}}L(o(\DF{\boldsymbol{V}|s}))p_s\).

Since \(S\indep Z\), we have for each \(x,y,z\in\{0,1\}\), \(p_{yx.z}=\sum_{s\in\{0,1\}}p_{yxs.z}=\sum_{s\in\{0,1\}}p_{yx.zs}p_{s.z}=\sum_{s\in\{0,1\}}p_{yx.zs}p_{s}=\mathbb{E}_P[p_{yx.zS}]\). Thus \(A^{-S}=\mathbb{E}_P[A^{|S}]\).

The covariate-optimal lower bound of \(\theta(\DF{\boldsymbol{V}})\), computed using \texttt{causaloptim} is given by \(L^{co}(\DF{\boldsymbol{O}})=\max(A)\) where
\[\begin{split}
    A:=\{&p_{100.1}+p_{101.1},\\
        -&p_{000.0}-p_{010.0}+p_{100.1}+p_{010.1}+p_{101.1},\\
         &p_{100.0}+p_{010.0}-p_{000.1}-p_{010.1}+p_{101.1},\\
         &p_{100.0}+p_{101.1},\\
         &p_{100.1}-p_{001.0}-p_{011.0}+p_{101.1}+p_{011.1},\\
        -&p_{000.0}-p_{010.0}+p_{100.1}+p_{010.1}-p_{001.0}-p_{011.0}+p_{101.1}+p_{011.1},\\
         &p_{100.0}+p_{010.0}-p_{000.1}-p_{010.1}-p_{001.0}-p_{011.0}+p_{101.1}+p_{011.1},\\
         &p_{100.0}-p_{001.0}-p_{011.0}+p_{101.1}+p_{011.1},\\
         &p_{100.1}+p_{101.0}+p_{011.0}-p_{001.1}-p_{011.1},\\
        -&p_{000.0}-p_{010.0}+p_{100.1}+p_{010.1}+p_{101.0}+p_{011.0}-p_{001.1}-p_{011.1},\\
         &p_{100.0}+p_{010.0}-p_{000.1}-p_{010.1}+p_{101.0}+p_{011.0}-p_{001.1}-p_{011.1},\\
         &p_{100.0}+p_{101.0}+p_{011.0}-p_{001.1}-p_{011.1},\\
         &p_{100.1}+p_{101.0},\\
        -&p_{000.0}-p_{010.0}+p_{100.1}+p_{010.1}+p_{101.0},\\
         &p_{100.0}+p_{010.0}-p_{000.1}-p_{010.1}+p_{101.0},\\
         &p_{100.0}+p_{101.0}\}
\end{split}\]
We show in the Supplementary Materials algebraically that \(\bar{L}(\DF{\boldsymbol{O}})=L^{co}(\DF{\boldsymbol{O}})\) and hence the covariate-averaged bounds in this setting are sharp. 

General conditions for sharpness under a given $G$ are more challenging to determine in this manner. However, it is easy to demonstrate that Assumptions 1-2 plus consistency are insufficient for covariate conditioning or covariate-averaging to yield sharp bounds for $\theta_s$ and $\theta$ respectively for all $G \in \mathcal{G}$, under a specific set of assumptions which include those encoded by the connectivity of $S$ in $G$. We now consider two examples where the covariate-averaged bounds are not sharp.

\subsection{Examples}
\subsubsection{Additional IV} \label{sec:addiv}

Consider the setting in Figure \ref{fig:simset3}, where $S, Z, X, Y$ are all binary. In this setting, our Assumptions 1-2 above hold, and thus Assumptions 3 and 4 of \citet{levis2023covariate} hold. In particular, their Assumption 3, which is $Z \indep (X(z), Y(z)) \vert S$, clearly holds, as does Assumption 4, which is $Y(x,z) = Y(x) \vert S$. 

For simplicity, we will only consider the first term of the risk difference: $\theta = P[Y(X = 1) = 1]$. Let $p_{yx\cdot zs} = P(Y = y, X = x \vert Z = z, S = s)$. The average Balke-Pearl lower bound is 
\begin{align*}
\bar{L}(\DF{\boldsymbol{O}}) = P(S = 1) \max\{ \\
1 - p_{00\cdot 01} - p_{01\cdot 01} - p_{10\cdot 0 1}, \\
1 - p_{00\cdot 1 1} - p_{01\cdot 1 1} - p_{10\cdot 1 1}, \\
1 - p_{01\cdot 0 1} - p_{10\cdot 0 1} - p_{00\cdot 1 1} - p_{01\cdot 1 1}, \\
1 - p_{00\cdot 0 1} - p_{01\cdot 0 1} - p_{01\cdot 1 1} - p_{10\cdot 1 1}
  \} + \\
P(S = 0) \max\{
1 - p_{00\cdot 0 0} - p_{01\cdot 0 0} - p_{10\cdot 0 0}, \\
1 - p_{00\cdot 1 0} - p_{01\cdot 1 0} - p_{10\cdot 1 0}, \\
1 - p_{01\cdot 0 0} - p_{10\cdot 0 0} - p_{00\cdot 1 0} - p_{01\cdot 1 0}, \\
1 - p_{00\cdot 0 0} - p_{01\cdot 0 0} - p_{01\cdot 1 0} - p_{10\cdot 1 0} 
  \}.
\end{align*}

We used \texttt{causaloptim} to derive the following covariate-optimal sharp lower bound on $\theta$ under Figure \ref{fig:simset3}. This method of derivation of the bounds has been proven to be sharp if the setting satisfies certain conditions, which are satisfied in this setting \citep{WOS:000801008700001}. The sharp covariate-optimal lower bound is 
\begin{align*}
{L}^{co}(\DF{\boldsymbol{V}}) = \max\{
1 - p_{00\cdot 0 0} - p_{01\cdot 0 0} - p_{10\cdot 0 0}, \\
1 - p_{00\cdot 1 0} - p_{01\cdot 1 0} - p_{10\cdot 1 0}, \\
1 - p_{00\cdot 0 1} - p_{01\cdot 0 1} - p_{10\cdot 0 1}, \\
1 - p_{00\cdot 1 1} - p_{01\cdot 1 1} - p_{10\cdot 1 1}, \\
1 - p_{01\cdot 0 0} - p_{10\cdot 0 0} - p_{00\cdot 0 1} - p_{01\cdot 0 1}, \\
1 - p_{00\cdot 0 0} - p_{01\cdot 0 0} - p_{01\cdot 0 1} - p_{10\cdot 0 1}, \\
1 - p_{01\cdot 0 0} - p_{10\cdot 0 0} - p_{00\cdot 1 0} - p_{01\cdot 1 0}, \\
1 - p_{01\cdot 0 1} - p_{10\cdot 0 1} - p_{00\cdot 1 0} - p_{01\cdot 1 0}, \\
1 - p_{00\cdot 0 0} - p_{01\cdot 0 0} - p_{01\cdot 1 0} - p_{10\cdot 1 0}, \\
1 - p_{00\cdot 0 1} - p_{01\cdot 0 1} - p_{01\cdot 1 0} - p_{10\cdot 1 0}, \\
1 - p_{01\cdot 1 0} - p_{10\cdot 1 0} - p_{00\cdot 1 1} - p_{01\cdot 1 1}, \\
1 - p_{01\cdot 0 0} - p_{10\cdot 0 0} - p_{00\cdot 1 1} - p_{01\cdot 1 1}, \\
1 - p_{01\cdot 0 1} - p_{10\cdot 0 1} - p_{00\cdot 1 1} - p_{01\cdot 1 1}, \\
1 - p_{00\cdot 0 1} - p_{01\cdot 0 1} - p_{01\cdot 1 1} - p_{10\cdot 1 1}, \\
1 - p_{00\cdot 0 0} - p_{01\cdot 0 0} - p_{01\cdot 1 1} - p_{10\cdot 1 1}, \\
1 - p_{00\cdot 1 0} - p_{01\cdot 1 0} - p_{01\cdot 1 1} - p_{10\cdot 1 1} 
\}.
\end{align*}

Note that each of the 8 terms inside the two maximum expressions of $\bar{L}$ are also contained in the maximum expression of $L^{co}$. However, there are more terms in the maximum expression of $L^{co}$. Thus we have 
\[
\bar{L} = p_s \max\{A_1\} + (1 - p_s) \max\{A_2\}
\]
and 
\[
L^{co} = \max\{A_1, A_2, B\}
\]
where $B \cap A_1 \cap A_2 = \varnothing$ and $p_s \in [0, 1]$. For a given distribution $\DF{\boldsymbol{V}}$, let $\tau_{\DF{\boldsymbol{V}}}$ denote the active term in the bound expressions, that is, the term or terms in the list of expressions which equals the maximum. Now we consider the following cases: 

\begin{itemize}
    \item If $\tau_{\DF{\boldsymbol{V}}} \in B$ and $\tau_{\DF{\boldsymbol{V}}} \notin A_1, \tau_{\DF{\boldsymbol{V}}} \notin A_2$ then clearly $\L^{co}(o(\DF{\boldsymbol{V}})) > \bar{L}(o(\DF{\boldsymbol{V}}))$. 
    \item If $\tau_{\DF{\boldsymbol{V}}} \in A_1$ or $(\tau_{\DF{\boldsymbol{V}}} \in A_1$ and $\tau_{\DF{\boldsymbol{V}}} \in B)$ (i.e., at least 2 terms are equal) then $\L^{co}(o(\DF{\boldsymbol{V}})) = \bar{L}(o(\DF{\boldsymbol{V}}))$ if and only if $p_s = 1$. But the causal model does not constrain $P(S = 1)$ in any way, thus we can find another distribution $P^*$ such that all the conditional probabilities are equal, but where $p_s < 1$, in which case $\L^{co}(o(\DF{\boldsymbol{V}})) > \bar{L}(o(\DF{\boldsymbol{V}}))$. The other cases can be handled similarly. 
\end{itemize}

The above demonstrates that there exists a distribution $\DF{\boldsymbol{V}}$ on $(Z, S, X, Y, U)$ from the causal model such that $\bar{L}(o(\DF{\boldsymbol{V}})) < L^{co}(o(\DF{\boldsymbol{V}}))$. Thus, $\bar{L}(o(\DF{\boldsymbol{V}}))$ is not sharp under $G$. 

\subsubsection{Identification with $S$} 

Consider the setting in Figure \ref{fig:dagsleft2}, where $S, Z, X, Y$ are all binary. In this setting, our Assumptions 1-2 hold, and thus Assumptions 3 and 4 of \citet{levis2023covariate} hold. 

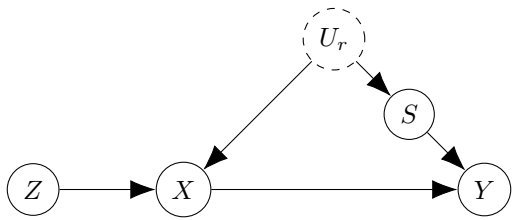
\begin{figure}[ht]
    \centering
    \begin{tikzpicture}
    \node[draw, circle, dashed] at (5, 2) (Ur) {\(U_r\)};
    \node[draw, circle] at (1, 0) (Z) {\(Z\)};
    \node[draw, circle] at (3, 0) (X) {\(X\)};
    \node[draw, circle] at (7, 0) (Y) {\(Y\)};
    \node[draw, circle] at (6, 1) (S) {\(S\)};
    \draw [-{Latex[scale=2.0]}] (Ur) to (X);
    \draw [-{Latex[scale=2.0]}] (Ur) to (S);
    \draw [-{Latex[scale=2.0]}] (Z) to (X);
    \draw [-{Latex[scale=2.0]}] (X) to (Y);
    \draw [-{Latex[scale=2.0]}] (S) to (Y);
    \end{tikzpicture}
    \caption{Causal IV setting with an unconfounded mediator \(S\)}
    \label{fig:dagsleft2}
\end{figure}

This example needs less explanation. Here, conditional on $S{=}s$, $\theta_s$ is point identified, and thus marginalizing over $S$, $\theta$ is also point identified via the back-door formula \cite{Pearl2000-PEAC-2}. The Balke-Pearl bounds conditioned on $S{=}s$ will not always collapse to $\theta_s$, and similarly averages of them will rarely, if ever, collapse to $\theta$. Thus, the Balke-Pearl bounds conditioned on levels on $S$ are not pointwise sharp, and averaging over them does not provide sharp bounds of $\theta$ in this setting.

We note that in \citet{levis2023covariate} Section 5.2, the authors state in our notation that one will want to covariate-average for $S$ whenever these covariates are predictive of $X$ and $Y$. However, our example here demonstrates that it is possible that covariates are predictive of $X$ and $Y$, but averaging over them can lead to bounds when instead considering the complete causal model can lead to point identification.

\section{Simulations}

To evaluate methods for bounding causal effects in the presence of unmeasured confounding but where at least one measured covariate is available, we simulated true probabilities from various models over \(S,Z,X,Y\) with all variables binary, and computed three different bounds based on them. The details of the simulation procedure are given in the Supplementary Materials. 

We consider variations of IV settings with a categorical covariate, expressed via causal diagrams shown in Figure \ref{fig:simsets}. Our estimand of interest is the causal risk difference $\theta = \mathbb{E}[Y(X = 1) - Y(X = 0)]$.
For each setting, we consider three methods of bounding the causal estimand: 
\begin{enumerate}
\item covariate-marginal (cm), which are the Balke-Pearl bounds,
\item covariate-averaged (ca), which are covariate-averaged bounds using the Balke-Pearl bounds conditional on $S$, and 
\item covariate-optimal (co) bounds derived using \texttt{causaloptim} including $S$. 
\end{enumerate}


\begin{figure}
    \centering
    \begin{subfigure}[b]{0.4\textwidth}
        \centering
        \begin{tikzpicture}[scale=0.8]
            \node[draw, circle, dashed] at (-4, 0) (Ul) {\(U_\ell\)};
            \node[draw, circle] at (-2, 2) (S) {\(S\)};
            \node[draw, circle] at (-2, 0) (Z) {\(Z\)};
            \node[draw, circle, dashed] at (1, 2) (Ur) {\(U_r\)};
            \node[draw, circle] at (0, 0) (X) {\(X\)};
            \node[draw, circle] at (2, 0) (Y) {\(Y\)};
            \draw [-{Latex[scale=2.0]}] (Ul) to (S);
            \draw [-{Latex[scale=2.0]}] (Ul) to (Z);
            \draw [-{Latex[scale=2.0]}] (S) to (Z);
            \draw [-{Latex[scale=2.0]}] (Ur) to (X);
            \draw [-{Latex[scale=2.0]}] (Ur) to (Y);
            \draw [-{Latex[scale=2.0]}] (Z) to (X);
            \draw [-{Latex[scale=2.0]}] (X) to (Y);
        \end{tikzpicture}
        \caption{Simulation setting (a)}
        \label{fig:simset1}
    \end{subfigure}
    \hfill
    \begin{subfigure}[b]{0.4\textwidth}
        \centering
        \begin{tikzpicture}[scale=0.8]
              \node[draw, circle, dashed] at (-4, 0) (Ul) {\(U_\ell\)};
            \node[draw, circle] at (-2, 2) (S) {\(S\)};
            \node[draw, circle] at (-2, 0) (Z) {\(Z\)};
            \node[draw, circle, dashed] at (1, 2) (Ur) {\(U_r\)};
            \node[draw, circle] at (0, 0) (X) {\(X\)};
            \node[draw, circle] at (2, 0) (Y) {\(Y\)};
            \draw [-{Latex[scale=2.0]}] (Ul) to (S);
            \draw [-{Latex[scale=2.0]}] (Ul) to (Z);
            \draw [-{Latex[scale=2.0]}] (Z) to (S);
            \draw [-{Latex[scale=2.0]}] (Ur) to (X);
            \draw [-{Latex[scale=2.0]}] (Ur) to (Y);
            \draw [-{Latex[scale=2.0]}] (Z) to (X);
            \draw [-{Latex[scale=2.0]}] (X) to (Y);
        \end{tikzpicture}
        \caption{Simulation setting (b)}
        \label{fig:simset2}
    \end{subfigure}
    \hfill
    \begin{subfigure}[b]{0.4\textwidth}
        \centering
        \begin{tikzpicture}[scale=0.8]
               \node[draw, circle, dashed] at (-4, 0) (Ul) {\(U_\ell\)};
            \node[draw, circle] at (-2, 2) (S) {\(S\)};
            \node[draw, circle] at (-2, 0) (Z) {\(Z\)};
            \node[draw, circle, dashed] at (1, 2) (Ur) {\(U_r\)};
            \node[draw, circle] at (0, 0) (X) {\(X\)};
            \node[draw, circle] at (2, 0) (Y) {\(Y\)};
            \draw [-{Latex[scale=2.0]}] (Ul) to (S);
            \draw [-{Latex[scale=2.0]}] (Ul) to (Z);
            \draw [-{Latex[scale=2.0]}] (S) to (Z);
            \draw [-{Latex[scale=2.0]}] (S) to (X);
            \draw [-{Latex[scale=2.0]}] (Ur) to (X);
            \draw [-{Latex[scale=2.0]}] (Ur) to (Y);
            \draw [-{Latex[scale=2.0]}] (Z) to (X);
            \draw [-{Latex[scale=2.0]}] (X) to (Y);
        \end{tikzpicture}
        \caption{Simulation setting (c)}
        \label{fig:simset3}
    \end{subfigure}
    \hfill
    \begin{subfigure}[b]{0.4\textwidth}
        \centering
        \begin{tikzpicture}[scale=0.8]
              \node[draw, circle, dashed] at (-4, 0) (Ul) {\(U_\ell\)};
            \node[draw, circle] at (-2, 2) (S) {\(S\)};
            \node[draw, circle] at (-2, 0) (Z) {\(Z\)};
            \node[draw, circle, dashed] at (1, 2) (Ur) {\(U_r\)};
            \node[draw, circle] at (0, 0) (X) {\(X\)};
            \node[draw, circle] at (2, 0) (Y) {\(Y\)};
            \draw [-{Latex[scale=2.0]}] (Ul) to (S);
            \draw [-{Latex[scale=2.0]}] (Ul) to (Z);
            \draw [-{Latex[scale=2.0]}] (Z) to (S);
            \draw [-{Latex[scale=2.0]}] (S) to (X);
            \draw [-{Latex[scale=2.0]}] (Ur) to (X);
            \draw [-{Latex[scale=2.0]}] (Ur) to (Y);
            \draw [-{Latex[scale=2.0]}] (Z) to (X);
            \draw [-{Latex[scale=2.0]}] (X) to (Y);
        \end{tikzpicture}
        \caption{Simulation setting (d)}
        \label{fig:simset4}
    \end{subfigure}
    \hfill
    \begin{subfigure}[b]{0.4\textwidth}
        \centering
        \begin{tikzpicture}[scale=0.8]
            \node[draw, circle] at (0, 2) (S) {\(S\)};
            \node[draw, circle] at (-2, 0) (Z) {\(Z\)};
            \node[draw, circle, dashed] at (2, 2) (Ur) {\(U_r\)};
            \node[draw, circle] at (0, 0) (X) {\(X\)};
            \node[draw, circle] at (2, 0) (Y) {\(Y\)};
            \draw [-{Latex[scale=2.0]}] (Ur) to (S);
            \draw [-{Latex[scale=2.0]}] (Ur) to (X);
            \draw [-{Latex[scale=2.0]}] (Ur) to (Y);
            \draw [-{Latex[scale=2.0]}] (S) to (X);
            \draw [-{Latex[scale=2.0]}] (S) to (Y);
            \draw [-{Latex[scale=2.0]}] (Z) to (X);
            \draw [-{Latex[scale=2.0]}] (X) to (Y);
        \end{tikzpicture}
        \caption{Simulation setting (e)}
        \label{fig:simset5}
    \end{subfigure}
    \hfill
    \begin{subfigure}[b]{0.4\textwidth}
        \centering
        \begin{tikzpicture}[scale=0.8]
                 \node[draw, circle, dashed] at (-4, 0) (Ul) {\(U_\ell\)};
            \node[draw, circle] at (-2, 2) (S) {\(S\)};
            \node[draw, circle] at (-2, 0) (Z) {\(Z\)};
            \node[draw, circle, dashed] at (1, 2) (Ur) {\(U_r\)};
            \node[draw, circle] at (0, 0) (X) {\(X\)};
            \node[draw, circle] at (2, 0) (Y) {\(Y\)};
            \draw [-{Latex[scale=2.0]}] (Ul) to (S);
            \draw [-{Latex[scale=2.0]}] (Ul) to (Z);
            \draw [-{Latex[scale=2.0]}] (Z) to (S);
            \draw [-{Latex[scale=2.0]}] (Ur) to (X);
            \draw [-{Latex[scale=2.0]}] (Ur) to (Y);
            \draw [-{Latex[scale=2.0]}] (S) to (X);
            \draw [-{Latex[scale=2.0]}] (X) to (Y);
        \end{tikzpicture}
        \caption{Simulation setting (f)}
        \label{fig:simset6}
    \end{subfigure}
    \caption{Simulation settings}
    \label{fig:simsets}
\end{figure}
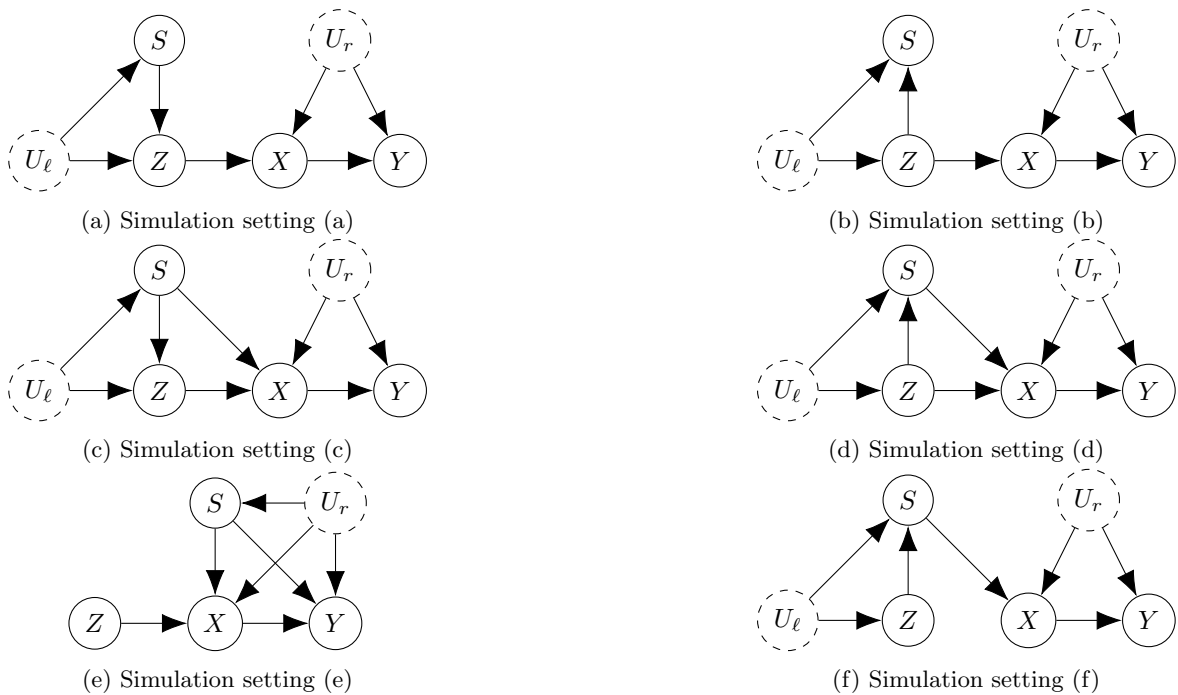

We performed \(n_{\text{sim}}:=10^5\) simulations for each of the following settings:
\renewcommand{\labelenumi}{\alph{enumi})}
\begin{enumerate}
    \item \(S\) as a confounded parent of \(Z\) as shown in Figure \ref{fig:simset1}
    \item \(S\) as a confounded child of \(Z\) as shown in Figure \ref{fig:simset2}
    \item \(S\) as a confounded parent of \(Z\) and a parent of \(X\) as shown in Figure \ref{fig:simset3}
    \item \(S\) as a confounded child of \(Z\) and a parent of \(X\) as shown in Figure \ref{fig:simset4}
    \item \(S\) is independent of \(Z\) and confounded parent of \(X\) and \(Y\) as shown in Figure \ref{fig:simset5}
    \item The effect of \(Z\) on \(X\) is entirely mediated via \(S\) as shown in Figure \ref{fig:simset6}
\end{enumerate}

\subsection{Results}
Every bound in every simulation was valid, as expected by theory for settings a-e, but potentially surprising in setting f. The covariate-optimal bounds are always contained in the covariate-averaged bounds and of course also in the covariate-marginalized bounds.
However, the covariate-averaged bounds are not contained in the covariate-marginalized bounds in settings d and f.
In setting f they are consistently worse.

Table \ref{tab:simresults0} shows the proportion of replicates where the bounds are contained in each other, and the average of the relative widths of the bounds for each setting in Figure \ref{fig:simsets} and for each of the three types of bounds. 
Figure \ref{fig:relwidths} shows the distributions of the relative widths of the covariate-optimal, covariate-averaged and covariate-marginalized bounds.

\begin{table}[ht]
    \centering
    \begin{tabular}{rlrrrrrr}
    \hline
        metric & model.a & model.b & model.c & model.d & model.e & model.f \\ 
        \hline
        co$\subseteq$cm & 1.00 & 1.00 & 1.00 & 1.00 & 1.00 & 1.00 \\ 
        co$\subseteq$ca & 1.00 & 1.00 & 1.00 & 1.00 & 1.00 & 1.00 \\ 
        ca$\subseteq$cm & 1.00 & 1.00 & 1.00 & 0.56 & 1.00 & 0.00 \\ 
        width(co)/width(cm) & 1.00 & 1.00 & 0.88 & 0.89 & 0.99 & 0.85 \\ 
        width(co)/width(ca) & 1.00 & 1.00 & 0.91 & 0.91 & 1.00 & 0.81 \\ 
        width(ca)/width(cm) & 1.00 & 1.00 & 0.97 & 0.98 & 0.99 & 1.05 \\ 
    \end{tabular}
    \caption{Results of simulations from models (a) to (f) for the covariate-marginalized bounds (cm), the covariate-averaged bounds (ca) and covariate-optimal bounds (co). \label{tab:simresults0}} 
\end{table}

\begin{figure}
    \centering
    \includegraphics[width=1.0\linewidth]{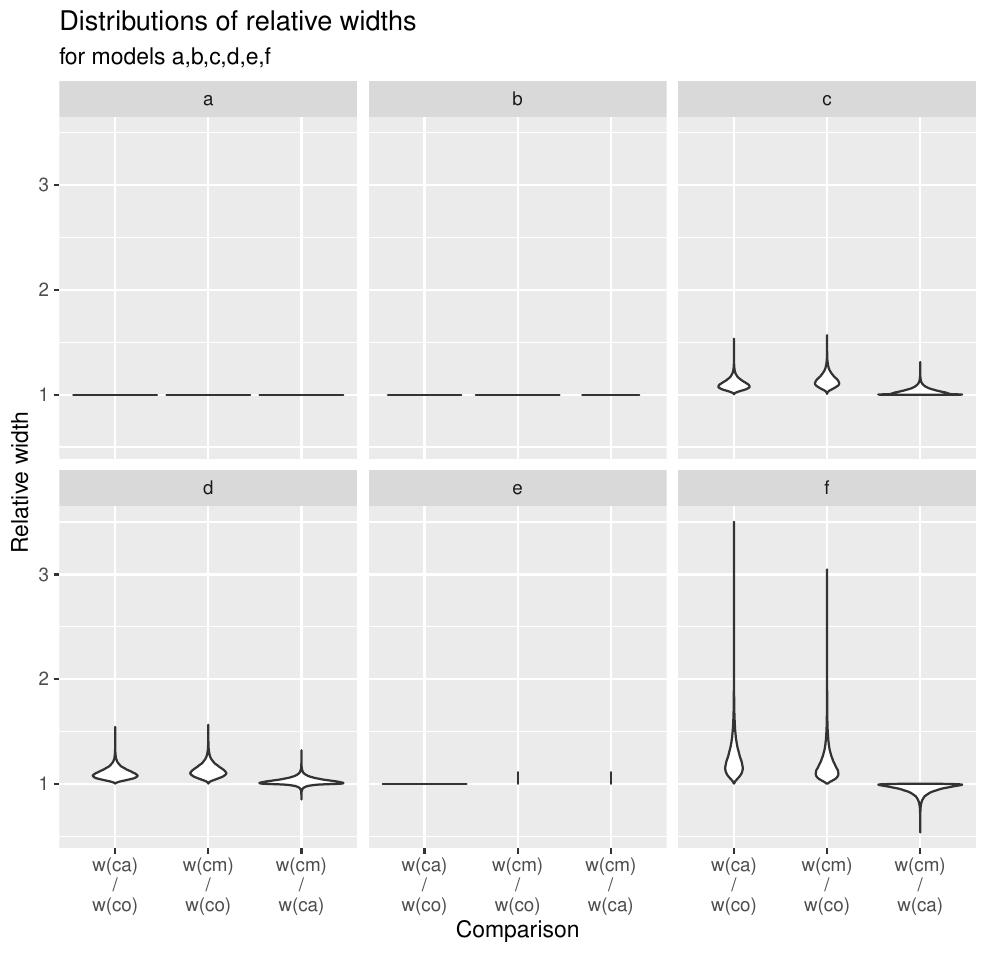}
    \caption{The distributions of width(ca)/width(co), width(cm)/width(co) and width(cm)/width(ca) for models a,b,c,d,e,f, where co, ca and cm are respectively the covariate-optimal, covariate-averaged and covariate-marginalized partial identification intervals.}
    \label{fig:relwidths}
\end{figure}

All three types of bounds coincide in Settings a and b and are also the same across the two settings. This is expected, since they both essentially reduce to a single binary IV setting.

In Setting c, where \(S\) and \(Z\) are confounded joint binary IVs for \(X\to Y\), which is equivalent to a 4-level IV, the covariate-optimal bounds are strictly narrower than the averaged or marginalized bounds.
The covariate-marginalized bounds are inferior to both because it has access only to the binary IV \(Z\) while the others use the IV \((S,Z)\) which provides more fine grained control of \(X\). The same phenomenon can be seen in Setting d, where the connection between \(S\) and \(Z\) is reversed, but the combined interpretation is similar.
Note that the covariate-marginalized bounds are exactly the same in both of these settings.

Setting e is the most general setting considered in \citet{https://doi.org/10.1002/sim.2766}, who showed the inclusion \text{ca}\(\subseteq\)\text{cm} and gave examples where this inclusion happens to be strict.
These examples relied on the of different strata to distinguish different compliance rates \(Z\to X\).
It is noteworthy that the covariate-averaged bounds coincide with the covariate-optimal bounds in this setting, whereas the covariate-marginalized bounds are strictly wider.

Setting f violates the condition of \citet{https://doi.org/10.1002/sim.2766}, that \(S\indep Z\), as well as that of \citet{levis2023covariate}, that conditioning on \(S\) renders \(Z\) a valid instrument (since relevance/association is broken).
The covariate-optimal bounds, however, are not affected and are furthermore strictly contained within the covariate-averaged bounds.

\section{Conclusions}
We have provided definitions of pointwise validity and sharpness that are in agreement with previous use in the literature but provide insight and clarity. We have additionally provided a definition for the novel concept of uniform sharpness, a concept used in \citet{levis2023covariate}, but not defined. 

We have proven two general propositions that can be used on a case-by-case basis to help determine in what settings covariate-conditional and covariate-averaged bounds will be valid and sharp. We have additionally proven a specific proposition that demonstrates that, in the all binary IV settings under specific conditions on a binary covariate, the set of covariate-conditional Balke-Pearl IV bounds are both pointwise and unformly sharp, and thus the covariate-averaged bounds are sharp for the marginal risk difference in this same setting. 

In several different IV settings, we demonstrate with simulations and examples where the covariate-conditional bounds and the covariate-averaged bounds are and are not pointwise and uniformly sharp for a given causal model $G$, and where covariate averaging can improve the width of the bounds, even if the bounds are not known to be sharp. We hope this demonstrates that sharpness is a nuanced concept requiring a clear statement of the assumed model or class of models, the estimand of interest, and the bounds under consideration.

We have demonstrated via examples that although Theorem 1 of \citet{levis2023covariate} is true as stated, it may or may not hold when further non-trivial information about $S$ is included in the causal model. General conditions, specifically graphical conditions, under which conditioning on $S$ will result in sharp bounds with respect to a specific casual model $G$ and causal estimand, are much more difficult to determine and is an area of future research for the authors.

\textbf{Funding information}: 
GJ, MCS, and EEG were partially supported by a grant from Novo Nordisk fonden NNF22OC0076595 and EEG and MCS the Pioneer Center for SMARTbiomed. 

\noindent
\textbf{Author contributions}: The authors jointly conceived the project. AS suggested the concept of uniform sharpness and derived its usefulness. GJ drafted the manuscript, wrote proofs and simulation code. EG and MCS contributed proofs and examples to the manuscript. All authors have commented on and contributed adjustments to the manuscript.

\noindent
\textbf{Conflict of interest}: No conflicts of interest to declare.

\noindent
\textbf{Data availability statement:} Code for reproducible simulations is available.


\bibliographystyle{plainnat}
\bibliography{refs}

\end{document}